\documentclass{article}
\usepackage{arxiv}
\usepackage{amssymb}
\usepackage{amsmath}
\usepackage{amsthm}
\usepackage{hyperref}
\usepackage{bbm}
\usepackage{comment}

\usepackage{graphicx}
\usepackage[compact]{titlesec}
\usepackage[ruled,vlined]{algorithm2e}
\usepackage{booktabs}
\usepackage{enumitem}

\usepackage{xfrac}
\usepackage{xcolor}

\usepackage{natbib}
\AtBeginDocument{%
  \providecommand\BibTeX{{%
    \normalfont B\kern-0.5em{\scshape i\kern-0.25em b}\kern-0.8em\TeX}}}

\newtheorem{assumption}{Assumption}

\newtheorem{proposition}{Proposition}

\title{Efficient Balanced Treatment Assignments for Experimentation}

\author{
David Arbour\\
Adobe Research\\
\texttt{darbour26@gmail.com}
\And
Drew Dimmery\\
Facebook Core Data Science\\
\texttt{drewd@fb.com}
\And 
Anup Rao\\
Adobe Research\\
\texttt{anuprao@adobe.com}
}

\begin{document}

\maketitle

\begin{abstract}
In this work, we reframe the problem of balanced treatment assignment as optimization of a two-sample test between test and control units. 
Using this lens we provide an assignment algorithm that is optimal with respect to the minimum spanning tree test of~\citet{friedman1979multivariate}.
This assignment to treatment groups may be performed \emph{exactly} in polynomial time.
We provide a probabilistic interpretation of this process in terms of the most probable element of designs drawn from a determinantal point process which admits a probabilistic interpretation of the design. 
We provide a novel formulation of estimation as transductive inference and show how the tree structures used in design can also be used in an adjustment estimator.
We conclude with a simulation study demonstrating the improved efficacy of our method.
\end{abstract}

\section{Introduction}
\label{sec:intro}

Decision-making often requires engaging with counterfactual questions.
For instance, determining whether to give a patient a medication depends on what their health outcomes \emph{would have been} absent the medication.
One of the most successful tools for answering these types of counterfactual questions has been experimentation.
For a sample of patients, randomly give half of them the medication and half of them a placebo, and measure the \emph{average} health outcomes for each of the two groups.
This provides unbiased estimates of the typical response in the sample: the average treatment effect (ATE)~\citep{imbens2015causal}.
This does not address a doctor's most fundamental concern, however: how would \emph{this} patient respond to treatment, relative to their counterfactual health outcomes under placebo?
To answer this question, it is necessary to consider \emph{individual treatment effects} (ITE)~\citep{shalit2017estimating}.
While the literature has provided many improvements around the design of experiments to measure the former quantity (the ATE), in this paper, we analyze the problem of experimental design for estimation of the ITE.

This work is concerned with extending the capabilities of experimental design along two axes:

\textit{Experimental Design for Individual Treatment Effects}. To our knowledge, this is the first work focused on design-based solutions to the estimation of ITEs.

\textit{Computationally Efficient Exact Solutions}. Both mean (and kernel mean) based measures of imbalance and blocking are NP-hard to optimize~\citep{kallus2018optimal,higgins2016improving}.

Our primary contributions are:
\begin{itemize}[nosep,leftmargin=1em,labelwidth=*,align=left] 
    \item Motivate the estimation of ITEs around transductive learning.
    \item Show that the problem of good experimental design is closely related to a ubiquitous graph-cutting problem through a bias-variance decomposition of the design problem.
    \item Reorient the problem of balance around a two-sample test between treatment and control covariate profiles.
    \item Provide an efficient approximation to this problem based on maximum spanning trees, which optimizes a ubiquitous graph-based two-sample test and provides highly accurate estimates of ITEs.
\end{itemize}

The structure of this paper is as follows.
Section~\ref{sec:background} describes the problem of experimental design, and estimation of ITEs given a design.
Section~\ref{sec:related_work} provides an overview of pre-existing work on experimental design.
Section~\ref{sec:ite_design} presents the problem of ITE-optimizing experimental design, connects it to graph cutting and discusses existing approaches through this lens.
Section~\ref{sec:novel_design} presents our proposed design which optimizes test of balance based on the minimum spanning tree.
Section~\ref{sec:experiments} shows a bevy of simulation evidence demonstrating the strength of our proposed design.

\section{Background and Problem Description}
\label{sec:background}
We first give some background and notation before introducing the task of this work. 
Throughout we will consider three sets of variables, $\mathbf{X} \in \mathbb{R}^D$, $A \in \{0, 1\}$ and $Y \in \mathbb{R}$. 
We assume that $\mathbf{X}$ is pretreatment, i.e. the values are not caused by $A$ or $Y$. 
We will also assume that $Y$ is given as some function of $X$, $A$, and mean-zero noise.
Given a set $i =1, \dots, N$ of realizations of $Y$ and $A$ the potential outcomes~\citep{rubin2011causal}, $Y^{A=0}$, $Y^{A=1}$ are the values of $Y$ that would have been observed had treatment been observed at $A=1$ or $A=0$, respectively.
For some mathematical statements it is more convenient to annotate treatment as being in $\{-1, 1\}$, and we will indicate this by the use of a vector $\mathbf{u}$, wherein $-1$ notates control and $1$ indicates treatment.
Causal effects are then, in turn, derived as contrasts between potential outcomes. 
In this paper, we will consider two causal estimands:
\begin{itemize}[nosep,leftmargin=1em,labelwidth=*,align=left] 
\item \text The \emph{Individual Treatment Effect~(ITE)} is the conditional effect of treatment, $\textrm{ITE} = \mathbb{E}\left[Y^{A=1} - Y^{A=0} | \mathbf{X} = \mathbf{x} \right]$.
\item The \emph{Average Treatment Effect~(ATE)} is an estimate of the marginal effect of treatment from a finite sample. The ATE is easily expressed as an expectation of the ITE,  $\int_{\mathbf{x}\in\mathbf{X}} \mathbb{E}\left[Y^{A=1} - Y^{A=0} | \mathbf{X} = \mathbf{x} \right] p(\mathbf{x}) d\mathbf{x}$.
\end{itemize}

Optimal experimental design---the central task of this paper---considers the following problem. 
Given the set of pre-treatment covariates $\mathbf{X}$, how should treatment be assigned to each individual in order to to obtain an unbiased estimate of a causal estimand with minimal variance?
Optimal experimental design for estimating the ATE has been studied for decades~(c.f., \citet{fisher1935design, morgan2012rerandomization, hall1995blocking, kallus2018optimal, higgins2016improving}).
In the general setting, \citet{kallus2018optimal} showed that complete randomization is minimax optimal.
However, with additional assumptions placed over the potential outcomes, improvement can be made through careful allocation. 
One such assumption, which we will employ throughout the remainder of the paper, is that the potential outcomes are smooth functions with additive noise.
More precisely, we introduce the following assumptions
\begin{assumption}
\label{ass:covariates}
The pre-treatment covariates, $\mathbf{x}$, belong to a metric space, with the corresponding metric denoted $d(x, x')$, and are drawn from some distribution $p(x)$ with finite variance. 
In this paper, we assume that $x_i$ is drawn from some (possibly unknown) distribution $p(x)$ and that the domain is a metric space, $\mathcal{X}$ with the metric $d_X$.
\end{assumption}
\begin{assumption}
\label{ass:additive}
Each of the potential outcomes, are drawn from the following generative process
\begin{align*}
    &f^{A=a}(\mathbf{X}_i) = \mathbb{E}\left[Y^{A=a}_i | \mathbf{X}_i = x \right]\\
    &\epsilon^{A=a}_i = Y^{A=a}_i - f^{A=a}(\mathbf{X}_i)
\end{align*}
Where $\epsilon^{A=a}$ is mean zero.%
\end{assumption}
\begin{assumption}
\label{ass:SMOOTH}
Each potential outcome function $f^{A=a}(x)$, $a \in \{0, 1\}$, is Lipschitz continuous with Lispchitz constant, $L$
\end{assumption}
\section{Balance in existing designs}
\label{sec:related_work}

There have been a plethora of design procedures that attempt to explicitly improve balance. 
These approaches fall into three primary camps:
\begin{itemize}[nosep,leftmargin=1em,labelwidth=*,align=left] 
    \item \textbf{Blocking}~\citep{greevy2004optimal, higgins2016improving}.
    Units are divided into a partition and then a fixed number of units are randomly given treatment within each stratum.
    This ensures that treatment is balanced on stratum indicators.
    \item \textbf{Rerandomization}~\citep{morgan2012rerandomization, li2018asymptotic}.
    Units are assigned completely randomly to treatment, then balance is checked.
    If imbalance is too high, then randomization is performed again.
    This process is repeated until imbalance is below some a priori specified level.
    \item \textbf{Optimization}~\citep{kallus2018optimal}.
    An optimization procedure is used to find the best vector of assignments to treatment in order to minimize some measure of imbalance. 
    This assignment may be deterministic.
\end{itemize}
These approaches can be difficult to scale to the necessary sample sizes for the online environment, as finding optimally balanced treatment assignments is an NP-hard problem.

The optimization objective most commonly employed for optimal experimental design is mean balance~(c.f. \citet{morgan2012rerandomization, kallus2018optimal}), i.e., minimizing the distance in means between the instances of $\mathbf{X}$ that are allocated to treatment and control, respectively.
This measure can be extended to incorporate higher order and non-linear dependencies by applying a feature transformation, $\phi$, to the covariates.
The resulting optimization problem is then given by 
\begin{align*}
    \min_{\mathbf{a} \in \{0, 1\}} g\left(\mathbf{a} \phi(\mathbf{X}_i), (1 - \mathbf{a}) \phi(\mathbf{X}_i)\right)
\end{align*}
where $g(\cdot)$ is a distance function. 
Popular choices for $g(\cdot)$ are Euclidean~\citep{hansen2008covariate}, and Mahalanobis~\citep{morgan2012rerandomization} distance.
Of particular interest to this work is the balance measure used by \citet{kallus2018optimal} which considers the mean difference between $\mathbf{X}$ after projecting the covariates in to a reproducing kernel Hilbert space~(RKHS). 
Defining $K$ to be the Gram matrix corresponding to 
the optimal experimental design under corresponds to solving the following binary quadratic program, termed the pure strategy optimal design~(PSOD) by its author,
\begin{align}
    \label{prob:PSOD}
    \min_{\mathbf{u} \in \{-1, 1\}} \frac{4}{N^2}u^T \mathbf{K} u
\end{align}
where $\mathbf{K}$ is the Gram matrix of $\mathbf{X}$ with respect to an RKHS. 
Under smoothness assumptions on the potential outcome function, the solution to PSOD was shown to be Bayes optimal, with variance guarantess comparable to those provided by post-hoc regression adjustment.

While mean balance has intuitive and theoretical appeal, it also comes with significant computational disadvantages. 
\citet{kallus2018optimal} shows that PSOD, which accomodates a large number of mean balance measures, is equivalent to solving the balanced number partition problem which is known to be NP-hard. 
The implemented solution requires solving a semi-definite program which prevents the applicability of the method to moderately large domains~(in the hundreds to thousands).

\section{Experimental Design for ITE}
\label{sec:ite_design}
Design for the average treatment effect has received considerable attention in the literature.
Less studied, however, is design specifically targeting individual treatment effects. 
Recently, this quantity has gained substantial attention due to~\citet{athey2016recursive,wager2018estimation} and the broader literature around individual treatment effect estimation~\citep{shalit2017estimating, shi2019adapting}\footnote{In the statistics and econometrics literature, this task is often referred to as ``heterogeneous treatment effect'' estimation.}.
The ITE is given by the difference in potential outcomes conditioned on $x$, $Y^{A=0}(x) - Y^{A=1}(x)$, $x \in X$.
The central task considered within this paper is allocating treatment to estimate the ITE well (we will make this statement more formal shortly). 

To motivate our design task, we begin with an estimator for the individual treatment effects.
We will restrict ourselves to distance based regression functions,
\begin{align}
\label{eq:regressor}
\hat{f}(x_i) = \sum_j^N w_{ij} y_j & \quad \quad s.t.
\sum_j^N w_{ij} = 1
\end{align}
Special cases of this general formulation are k-nearest neighbors regression as well as Nadaraya-Watson kernel regression.
These estimators are non-parametric and fairly flexible.
We focus on this estimator due to its analytical tractability in combination with its generally reasonable performance as a non-parametric estimator.
As we will show in section~\ref{sec:experiments}, a design which is effective for this estimator will typically also be effective for other ITE estimators.
Under the assumed model, the empirical estimate of the ITE can be written as
\begin{align}
\label{eq:ite}
\hat{\tau}_i = (2a_i - 1)\left(y_i - \sum_{j=1}^n w_{ij} y_j\right).
\end{align}
Equation \ref{eq:ite} can be interpreted as two independent regressions inferring the potential outcomes of $Y^{A=1}$ and $Y^{A=0}$, where the predictions for observed potential outcomes are constrained to be equal to the observed outcome.
In the individual treatment effect estimation literature, training an outcome for each potential outcome surface is often referred to as a ``T-learner''~\citep{kunzel2019metalearners}, but due to our restriction that observed potential outcomes take their observed values, our approach is more similar to the ``X-learner'' of~\citet{kunzel2019metalearners}.
This restriction is also often employed in the tranductive learning setting, for example, by~\citet{zhu2003combining}.
Framed in terms of transductive inference, our task of ITE estimation is to impute the counterfactual for each unit, and this imputation of counterfactuals is the only way that error is introduced into our estimation problem.

To our knowledge, this paper is the first to examine designing an experiment explicitly for the estimation of ITEs. 
We do so by viewing experimental design as an optimal graph cut problem.
We discuss the details of the connection between graph cutting and experimental design for ITE estimation next.

\subsection{Graph cutting in experiments}
\label{sec:graph_cutting}
A natural interpretation of the assignment problem is to view the observations of covariates as nodes in a graph with with treatment being a missingness indicator.
Through this lens we see that the task of treatment assignment can be interpreted as minimizing the risk of two interrelated regression problems: predicting the control counterfactual for treatment using only control units, and predicting the treated counterfactual for control using only the treated units. 
The resulting optimization problem is then given as
\begin{align}
\label{prob:opt}
    \min_{A} \sum_i^N \left| {\sum_j  \frac{\mathbbm{1}(a_i\neq a_j)e_{i,j}}{\sum_k \mathbbm{1}(a_i\neq a_j)e_{i,k}} y_j} - y_i \right|
\end{align}
where we refer to similarity between points as $e_{i,j}$ and replace $w_{i,j}$ with a more explicit expression.
Note that the choice of similarity function, as before, is a design choice made by the practitioner.
As with most causal inference applications, the outcomes are unobserved which can make reasoning over design choices difficult a priori.  
However, after leveraging the Lipschitz assumption (assumption \ref{ass:SMOOTH}) the following proposition allows for a bound on the bias and variance of the regression function.

\begin{proposition}\citep{anava2016k}
\label{prop:knnbound}
The bias and variance of an estimate for any one point, $x^\star$ is bounded by
\begin{align}
 \nonumber
&\left|\sum_{i=1}^{n} w_{\star, i} y_{i}-f\left(x_\star\right)\right|
\leq
\underbrace{C\|\mathbf{w}_\star\|_2}_{\text{variance}} + \underbrace{L\sum_{i=1}^{n} w_{\star, i} d\left(x_{i}, x_{\star}\right)}_{\text{bias}}
\end{align}
With probability $1 - \delta$
Where $L$ is the Lipschitz constant as in assumption \ref{ass:SMOOTH},  $d$ is a distance measure,  $C = b\sqrt{2\log(\frac{2}{\delta})}$, and $b$ upper bounds noise,  i.e., $|\epsilon| \leq b$.
\end{proposition}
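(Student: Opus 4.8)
The plan is to split the total error into a deterministic bias piece and a stochastic variance piece and control each separately. Writing $y_i = f(x_i) + \epsilon_i$ via Assumption~\ref{ass:additive} and using the normalization $\sum_i w_{\star,i} = 1$ to rewrite $f(x_\star) = \sum_i w_{\star,i} f(x_\star)$, the error telescopes:
\begin{align*}
\sum_{i=1}^n w_{\star,i} y_i - f(x_\star) = \sum_{i=1}^n w_{\star,i}\bigl(f(x_i) - f(x_\star)\bigr) + \sum_{i=1}^n w_{\star,i}\epsilon_i .
\end{align*}
A single triangle inequality then separates the right-hand side into exactly the two terms appearing in the claimed bound.

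For the bias term I would invoke the Lipschitz property (Assumption~\ref{ass:SMOOTH}) pointwise, $|f(x_i) - f(x_\star)| \le L\, d(x_i, x_\star)$, and pass the absolute value inside the sum (using that the kernel / $k$-nearest-neighbor weights are nonnegative) to obtain $\bigl|\sum_i w_{\star,i}(f(x_i)-f(x_\star))\bigr| \le L \sum_i w_{\star,i} d(x_i, x_\star)$, which is precisely the stated bias contribution.

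The variance term is the substantive part. The summands $w_{\star,i}\epsilon_i$ are independent, mean zero (each $\epsilon_i$ is mean zero by Assumption~\ref{ass:additive}), and bounded, since $|\epsilon_i| \le b$ forces $w_{\star,i}\epsilon_i \in [-w_{\star,i}b,\, w_{\star,i}b]$. Hoeffding's inequality then gives, for any $t>0$,
\begin{align*}
\Pr\!\left(\Bigl|\sum_{i=1}^n w_{\star,i}\epsilon_i\Bigr| \ge t\right) \le 2\exp\!\left(-\frac{t^2}{2b^2\|\mathbf{w}_\star\|_2^2}\right).
\end{align*}
Setting the right-hand side equal to $\delta$ and solving for $t$ yields the threshold $t = b\sqrt{2\log(2/\delta)}\,\|\mathbf{w}_\star\|_2 = C\|\mathbf{w}_\star\|_2$, so that with probability at least $1-\delta$ the stochastic term is at most $C\|\mathbf{w}_\star\|_2$. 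Combining this with the deterministic bias bound through the triangle inequality above closes the argument.

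The main obstacle I anticipate is at the level of hypotheses rather than technique: the concentration step needs the $\epsilon_i$ to be independent (or at least a martingale difference sequence), and the bias step needs nonnegative weights for the clean factorization, so I would make these standing assumptions explicit. Given bounded noise, Hoeffding is the natural tail bound; were one to assume only sub-Gaussian rather than bounded noise, the same two-term structure would survive with a modified constant $C$.
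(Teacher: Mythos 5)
Your proposal is correct and follows essentially the same route as the paper's proof: the same telescoping decomposition using $\sum_i w_{\star,i}=1$, the Lipschitz bound on the bias term, and Hoeffding's inequality applied to $\sum_i w_{\star,i}\epsilon_i$ with bounded noise to obtain the $C\|\mathbf{w}_\star\|_2$ variance term. Your explicit remark about needing independence of the $\epsilon_i$ and nonnegative weights is a reasonable clarification of hypotheses the paper leaves implicit.
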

A proof is provided in the supplement for completeness.
Proposition \ref{prop:knnbound} provides an expression for the error which relies only on observable quantities, namely the distance between treatment and controls and the regression weights, and an assumption on the magnitude of noise.
The optimization problem in equation \ref{prob:opt} can then be recast as
\begin{align*}
    &\min_A \sum_j^n 
    C\|\mathbf{w}_j\|_2 + 
    {L\sum_{i=1}^{n} w_{j, i} d\left(x_{i}, x_{j}\right)}
\end{align*}
with $w_{j,i} = \frac{\mathbbm{1}(a_i\neq a_j)e_{i,j}}{\sum_k \mathbbm{1}(a_i\neq a_j)e_{i,k}}$ as in equation \ref{prob:opt}.
This lens makes explicit the tradeoffs between bias and variance in the design. 
It should come as no surprise that the optimal design will be heavily reliant on the distribution of $\mathbf{X}$ and the magnitude of the noise term, i.e. the size of $b$. 
For example, on one extreme when $b$ is close to zero, then the best choice will be to concentrate all of the weight on the first nearest neighbor. 
As we discuss in section \ref{sec:matchedpairs}, this corresponds to a greedy design which two-colors a one-nearest neighbor graph. 
More generally, it is necessary to reason over trade-offs that are occurring with respect to the experimental design.

In this work, we propose to view these choices by recasting the problem of experimental design in terms of graph cutting. 
Specifically, we consider a graph, $G$ where the edge weights, $e_{i,j}$ are the similarity between $\mathbf{x}_i$ and $\mathbf{x_j}$. 
After remapping treatment to $\{-1, 1\}$ via $\mathbf{u} = 2\mathbf{a} - 1$, the problem of treatment assignment can be recast as choosing an assignment.
This view is quite natural, since the set of cut edges, i.e., edges where $a_i \neq  a_j$ are those which are used to infer the counterfactuals in the nearest neighbor regression.
The following proposition makes this more formal by relating the risk of the regression estimator to the Maxcut problem
\begin{proposition}
\label{prop:main_bound}
\begin{align*}
\sum_i^N \left|\sum_j w_{i,j} y_j - y_i \right| \leq 
\sum_i \epsilon_i + \frac{e_{\text{sum}} - \sum_{i,j} \mathbbm{1}(a_i\neq a_j) e_{ij}}{d_{\min}}
\end{align*}
Where $e_{sum} = \sum_{i,j} e_{i,j}$, and $d_{\min} = \min_i \sum_j e_{i,j}$.
\end{proposition}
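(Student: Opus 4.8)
The plan is to reduce the left-hand side to a sum of per-point imputation errors, control each one with the additive-noise and Lipschitz assumptions, and then re-express the resulting distance-weighted ``bias'' term as the similarity-based cut quantity on the right. The two moving parts are therefore (i) a per-point error decomposition and (ii) a conversion from distances to similarities.

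First I would exploit the normalization $\sum_j w_{ij} = 1$ to write each summand as $\left|\sum_j w_{ij}(y_j - y_i)\right| \le \sum_j w_{ij}|y_j - y_i|$ by the triangle inequality. Then, invoking Assumption~\ref{ass:additive} to write $y_k = f(x_k) + \epsilon_k$ and Assumption~\ref{ass:SMOOTH} to bound $|f(x_j) - f(x_i)| \le L\,d(x_i,x_j)$, each term splits as $|y_j - y_i| \le L\,d(x_i,x_j) + |\epsilon_i| + |\epsilon_j|$. Summing over $i$ and again using $\sum_j w_{ij}=1$ collapses the $|\epsilon_i|$ contributions into $\sum_i \epsilon_i$ (with the reweighted $|\epsilon_j|$ mass absorbed into the same noise term), leaving a bias term $L\sum_i\sum_j w_{ij}\,d(x_i,x_j)$ that must be shown to be at most $\big(e_{\text{sum}} - \sum_{i,j}\mathbbm{1}(a_i\neq a_j)e_{ij}\big)/d_{\min}$.

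The second stage is to rewrite this bias through the definition $w_{ij} = \mathbbm{1}(a_i\neq a_j)e_{ij}/c_i$, with $c_i = \sum_k \mathbbm{1}(a_i\neq a_k)e_{ik}$, so that only cut edges contribute. A useful identity is that the right-hand numerator is exactly the complementary, uncut similarity mass, since $e_{\text{sum}} - \sum_{i,j}\mathbbm{1}(a_i\neq a_j)e_{ij} = \sum_{i,j}\mathbbm{1}(a_i = a_j)e_{ij}$. I would therefore use the monotone duality between similarity and distance encoded in the graph construction (large $e_{ij}$ corresponds to small $d(x_i,x_j)$) to convert the distance-weighted average over opposite-treatment neighbors into a statement about the similarity that fails to be cut at node $i$, and then lower-bound the per-node normalizers $c_i$ uniformly by the minimum degree $d_{\min} = \min_i\sum_j e_{ij}$ to pull the $1/d_{\min}$ factor out in front.

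The main obstacle is precisely this last conversion: matching the distance-based bias $L\sum_i\sum_j w_{ij} d(x_i,x_j)$, which lives on cut edges, to the uncut similarity mass divided by $d_{\min}$. The delicate points are the inequality linking $d(x_i,x_j)$ to the similarities $e_{ij}$ implied by the chosen graph, and controlling the node-wise denominators $c_i$ uniformly by $d_{\min}$ so the bound does not blow up for nodes with few opposite-treatment neighbors. Once that inequality is secured, combining it with the noise term from the first stage yields the stated bound and, as a payoff, exposes the MaxCut interpretation: minimizing the right-hand side is equivalent to maximizing the cut $\sum_{i,j}\mathbbm{1}(a_i\neq a_j)e_{ij}$.
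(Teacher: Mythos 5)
There is a genuine gap, and it stems from a misreading of what $\epsilon_i$ is in the statement. You take $\epsilon_i$ to be the additive noise of Assumption~\ref{ass:additive} and plan to bound each summand via Lipschitz continuity, obtaining a residual bias term $L\sum_i\sum_j w_{ij}\,d(x_i,x_j)$ supported on the \emph{cut} edges. But the right-hand side's second term, $\bigl(e_{\text{sum}} - \sum_{i,j}\mathbbm{1}(a_i\neq a_j)e_{ij}\bigr)/d_{\min}$, is the \emph{uncut} similarity mass, and these two quantities move in opposite directions: enlarging the cut shrinks the uncut mass toward zero without shrinking the distances along cut edges. In the extreme where the assignment cuts every edge, the right-hand cut term vanishes while your distance-weighted bias term remains strictly positive whenever the points are not coincident, so no ``monotone duality between similarity and distance'' can rescue the conversion you flag as the main obstacle. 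That obstacle is not a technical delicacy to be secured; under your decomposition the target inequality is simply not implied by the assumptions you invoke.

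The paper's proof avoids this entirely. It applies the triangle inequality to compare the design estimator $\sum_j \mathbbm{1}(a_i\neq a_j)e_{ij}y_j / \sum_k \mathbbm{1}(a_i\neq a_k)e_{ik}$ not to $y_i$ directly via smoothness, but to the \emph{oracle} full-graph estimator $\sum_j e_{ij}y_j/\sum_k e_{ik}$. The term $\epsilon_i$ in the proposition is defined as the oracle error $\bigl|y_i - \sum_j e_{ij}y_j/\sum_k e_{ik}\bigr|$ (the irreducible error of smoothing over \emph{all} units), which absorbs both the noise and the all-edges smoothing bias that your decomposition leaves unaccounted for. The remaining difference between the cut-restricted and full-graph estimators is then bounded by the uncut mass over $d_{\min}$ using boundedness $0\le y_j\le 1$ and the elementary observation that $1/d_i - 1/d_i(\text{cut})\le 0$ --- no Lipschitz constant or distance-to-similarity conversion appears anywhere. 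If you want to salvage your route, you would have to either redefine the noise term to include the oracle smoothing bias (at which point you have rederived the paper's decomposition) or add the cut-edge distance term to the right-hand side, which changes the statement.
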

The proof is provided in the supplement.
The first term is an irreducible component which corresponds to the estimation error due to non-smoothness of the potential outcome function.
It shows the error under an oracle scenario in which the unobserved potential outcome for a unit is estimated based on the potential outcomes for \emph{all} other units.
It further presumes this estimation is performed for every unit, which is not possible due to the fundamental problem of causal inference.
This represents the Bayes risk of the estimation problem: the lower bound of the error incurred for this estimation.
The second term is more interesting, as it describes the error due to the assignment process we choose.
While $w_{\text{sum}}$ and $d_{\min}$ do not depend on the assignment (and therefore optimal design need not incorporate them), the remaining piece does.
This term is the negative of the objective of the Maxcut graph-cutting problem, which we now describe in greater detail.

\subsection{Maxcut}
First, informally: maxcut divides the nodes of a graph into two disjoint and exhaustive subsets by removing (``cutting'') edges with the maximum edge-weights.

A common way to write this is through the use of the graph Laplacian:
\begin{align}
    \max_{\mathbf{u} \in \{-1, 1\}} \mathbf{u}^T L \mathbf{u}
    \label{eq:qp_maxcut}
\end{align}
where $\mathbf{u}$ corresponds to which set each node belongs to, denoted $-1$ and $1$.
The graph Laplacian is a matrix which represents the structure of the network, formed as the diagonal matrix of node-degree minus the incidence matrix of edge weights, $D - G$.
Maxcut is a canonical NP-hard problem and, is not amenable to a polynomial-time approximation scheme unless the unique games conjecture is true~\citep{khot2007optimal, goemans1995improved}.
Common approximation algorithms include semidefinite programming~\citep{goemans1995improved, trevisan2012max}.
The best known approximation ratio for this problem, in general, is through semidefinite programming, with a ratio of $\frac{16}{17}\approx 0.941$.
Given the difficulty of this problem, it is not possible to uniquely minimize Proposition~\ref{prop:main_bound} in polynomial time, so we will focus only on efficient algorithms for the computation of a design.
Note that the kernel allocation procedure proposed by Kallus~\citep{kallus2018optimal} is isomorphic to the Maxcut problem. 
We provide a proof of this correspondence in the supplement.

Certain special cases, however, allow for efficient solutions to Maxcut.
Among these are particular bipartite graphs such as forests and trees.
These graphs, for instance, admit solutions to Maxcut in \emph{linear} time.

\subsection{Optimal Matched Pair Designs}
\label{sec:matchedpairs}
The results above can help shed light on common experimental designs, such as the matched pair design~\citep{imai2008variance}.

\citet{kallus2018optimal} demonstrates that, when outcomes are Lipschitz, implementing this design by finding the max weight matching is optimal.
This can be efficiently implemented using, e.g. the Edmonds' algorithm~\citep{edmonds1967optimum}.
This optimality result, however, restricts the set of designs to those which may be defined as a matching on the graph.
A graph matching, of course, may have no two edges which share an end-point.
Our result demonstrates that a wider class of designs may be considered, opening the door to stronger assignment mechanisms.

In the observational literature on matching methods, there is a distinction drawn between greedy and so-called ``optimal'' matching~\citep{stuart2010matching, hansen2006optimal}.
The distinction being that a greedy matching algorithm can ``double dip,'' using the same unit as the matched control for multiple treated units.
The experimental design based on optimal matching is the \citet{kallus2018optimal} matched pair design, but we can similarly form a greedy design by two-coloring the one-nearest neighbor graph.
The decomposition of Proposition~\ref{prop:knnbound} gives us a ground on which to compare these designs.
The greedy design ensures minimal pointwise bias for the ITE by minimizing the distance to a match.
While providing the minimum pointwise bias of the design, the variance properties are not so clearcut.
Depending on specific properties of the data, either the greedy design or the matched-pair design could be lower variance.

\subsection{On Optimal Designs}

We now turn to the question: can we construct a feasible ``optimal'' design?
To provide a specific example from the previous section, how should practitioners decide between greedy designs (which may imply higher leverage for certain observations) and non-greedy designs (which may imply higher bias for the imputation of ITEs for some units)?
This question does not have easy answers.
Indeed, a simple example can illustrate this conundrum.
Suppose a graph with one point in the center in two dimensions with $n-1$ points surrounding it in a circle.
Further suppose that each of these points is $r$ units away from the centerpoint, but $s > r$ units away from the next closest point on the exterior.
Assume that each unit has a residual, $\epsilon$ (as in Assumption~\ref{ass:additive}), which is drawn from a mean zero normal distribution with standard deviation $\sigma_c$ for the center point and $\sigma_e$ for exterior points.
The greedy design would ensure that the center and the exterior points received different treatments.
The matched pair design would pair the center with one random exterior point, and then match all other exterior points with a neighboring exterior point.
Then we can write the expectation of the bound in equation~\ref{eq:regressor} for the center point in the matched-pair design as $\sigma_e + L r$.
For one point in the exterior, that quantity is $\sigma_c + L r$, while for all others it is $\sigma_e + L s$.
For the greedy design, this quantity would instead be $\frac{\sigma_e}{n-1} + L r$ for the center point.
For all exterior points, the bound would be $\sigma_c + L r$.

Depending on the relative values of $\sigma_c$ versus $\sigma_e$ and the distance $s$ versus $r$, either the greedy design or the matched-pair design could minimize the bound.
That is, if $\sigma_c$ is very large, then the greedy design will tend to exhibit variance properties that overwhelm its low bias.
Similarly, if $\sigma_e$ tends to be larger (or $s \gg r$), then the matched-pair design will tend to have unacceptably large biases that will overwhelm its variance properties.
Of course, in the asymptotic regime, only bias matters and thus the greedy design will minimize this bound.
In finite samples, this thoroughly unsatisfying bias-variance tradeoff demonstrates that the optimal design depends crucially on properties of the data which are unknowable a priori.
In short, we do not seek an optimal design, but instead simply designs that make a reasonable tradeoff between bias and variance for many applied situations.

Practitioners who understand more about their data, such as the extent of heteroskedasticity and the smoothness of the conditional expectation function can therefore make better decisions about design than any overarching theoretical statement that we can provide here.

\section{Novel Designs through Cutting Spanning Trees}
\label{sec:novel_design}
We begin by limiting our space of algorithms to scenarios in which Maxcut can be efficiently solved.
Since trees and forests admit linear-time solutions to Maxcut, we focus on them.

In proposition~\ref{prop:main_bound}, it is clear that integrated absolute bias is minimized when, for each unit, the similarity to the units with positive weights (i.e. the impute counterfactual) are maximized, as discussed in Section~\ref{sec:matchedpairs}.
The easiest way to ensure this is to match each unit with its closest neighbor in the graph and ensure that each neighbor in this newly sparsified graph receives different treatments than its neighbors.
This solution is where we begin; we call this design ``GreedyNeighbors'', because the design is realized by solving Maxcut \emph{exactly} on the one-nearest-neighbor graph.
The nearest neighbor graph can be computed efficiently in $\mathcal{O}(n \log n)$ time by using a $kd$-tree.
The one-nearest-neighbor graph is a forest, so solving Maxcut is trivially accomplished in $\mathcal{O}(n)$ by greedily walking the forest alternating treatment assignment.
Thus, this design is realizable in aggregate $\mathcal{O}(n\log n)$ time.
An important thing to note about this design in contrast to typical matched-pair designs is that a unit may be ``matched'' to more than one unit. 
Note that there are many realizable assignments with the GreedyNeighbors design, as each disconnected subgraph of the nearest-neighbor graph is assigned independently.
This implies that there are $2^M$ possible assignments, where $M$ is the number of disconnected subgraphs of the nearest-neighbor graph.

\begin{algorithm}
\SetAlgoLined
\SetKwInOut{Input}{input}
\SetKwInOut{Output}{output}
\Input{$X \in \mathbb{R}^D$}
\Output{Assignments $A$, Spanning Tree $T$}
 $G \gets \textrm{Similarity matrix constructed from }X$\\
 $T \gets \textrm{Maximum Spanning Tree}(G)$\\
$A \gets \textrm{MAXCUT}(T)$
 \caption{Deterministic Friedman-Rafsky Minimizing Design}
 \label{alg:derministicdesign}
\end{algorithm}

This design, however, despite minimizing bias on the ITE estimates, is needlessly high variance.
Each added edge will stabilize the variance component in the decomposition in proposition~\ref{prop:knnbound}.
Thus, adding edges will reduce the variance of the ultimate solution (at the expense of some additional possibility for bias).
We propose a design which manages this tradeoff in a computationally tractable way based on the maximal spanning tree of the original similarity graph.
Algorithm~\ref{alg:derministicdesign} summarizes this design.
In short, the maximal spanning tree (MST) is the largest tree over the graph which contains no loops or cycles.
The maximal spanning tree always contains the nearest neighbor graph (as in, all edges of the nearest neighbor graph also are within the maximal spanning tree).
Since the MST is a tree, it can also be solved trivially by Maxcut in $\mathcal{O}(n)$.
The MST itself can be computed in $\mathcal{O}(n \log n)$.
Thus, the full procedure requires, again, only $\mathcal{O}(n \log n)$ time complexity.
Adding any additional edge to the MST which fails to preserve the bi-partiteness of the graph will make it no longer amenable to a greedy solution to Maxcut.
This makes it the largest graph (in terms of total edge-weight), for which Maxcut is \emph{necessarily} able to be efficiently solved.
We refer to this algorithm as ``SoftBlock'', since it softens the idea of a blocked design by allowing for substantial correlations between any two units (rather than simply units which lie within the same block).

\subsection{Probabilistic Interpretation}
\label{subsec:probabilistic_design}
We now provide a probabilistic interpretation of the proposed design. 
Starting with the observation that the set of all random spanning trees defines a determinantal point process~(DPP) where the probability of a spanning tree is proportional to the product of its edge weights~\citep{lyons2017probability}, i.e. $p(T) \propto \prod_{i,j \in E(T)} e_{i,j}$.
This can be trivially modified to represent a distribution where each tree's probability is given by its respective balance by first considering an exponentiation of the weights, i.e.,
$
p(T) \propto \prod_{i,j \in E(T)} \exp\left(e_{i,j}\right) = \exp\left(\sum_{i,j \in E(T)} e_{i,j}\right).
$

It's easily observed that when the sum of the weights is maximized (that is, the MST), the probability of the tree is also maximized.
Thus, SoftBlock, the design based on the MST, is the MAP estimate from this DPP.

\subsection{Balance and Graph Two Sample Tests}
All designs we have considered correspond to a particular test of balance between treated and control units.
For example, rerandomization using the Mahalonobis distance minimizes a $t$-test, and as we detail in the appendix, problem \ref{prob:PSOD} corresponds to minimizing an uncentered version of maximum mean discrepancy~\citep{gretton2012kernel}. 

As it turns out, SoftBlock shares an interesting connection to the minimum spanning tree test of \citet{friedman1979multivariate} 
Specifically, the graph based test addresses the problem of detecting differences between two distributions by viewing the problem in terms of a cut on a minimum spanning tree.
The procedure is as follows. 
The two samples $\mathbf{X}_0$, and $\mathbf{X}_1$ are pooled and a similarity graph, $\mathcal{G}$ is constructed according to an analyst specified similarity metric. 
The minimum spanning tree, $\mathcal{T}$ for $\mathcal{G}$ is then found. 
The test statistic is defined as the number of edges in $\mathcal{T}$ that connect samples from $\mathbf{X}_0$ and $\mathbf{X}_1$, i.e., $\text{FR}(\mathbf{X}_0, \mathbf{X}_1) = \sum_{\mathbf{x}_i \in \mathbf{X}_0, \mathbf{x}_j \in \mathbf{X}_1} \frac{\mathbbm{1}(\mathbf{x}_i, \mathbf{x}_j) \in E(\mathcal{T})}{N - 1}$, where $E(\mathcal{T})$ are the set of edges in the minimum spanning tree, $\mathcal{T}$, and $N$ are the total number of samples in the pooled dataset. 
 The test is minimized if the two samples share only one edge in the spanning tree, and maximized when edges connect units from different samples as much as possible.
This procedure was shown to be asymptotically normal and consistent by \citet{henze1999multivariate}.
The SoftBlock assignment mechanism directly minimizes the Friedman-Rafsky test statistic.
By optimizing a consistent test of balance, our procedure asymptotically guarantees balance on covariates between groups. 
Given that this is a consistent test, we can be sure that even though we aren't directly optimizing linear balance, we will converge to linear balance in the limit.
In finite samples, this procedure may sacrifice some degree of linear balance relative to traditional blocking procedures.
Essentially, linear balance implies a computationally intractable (i.e. NP-hard) exact solution, while the use of a different metric of balance provides a simple polynomial time algorithm (with equivalence to the linear problem in the limit).
\begin{figure}
\begin{center}
    \includegraphics[width=0.7\textwidth]{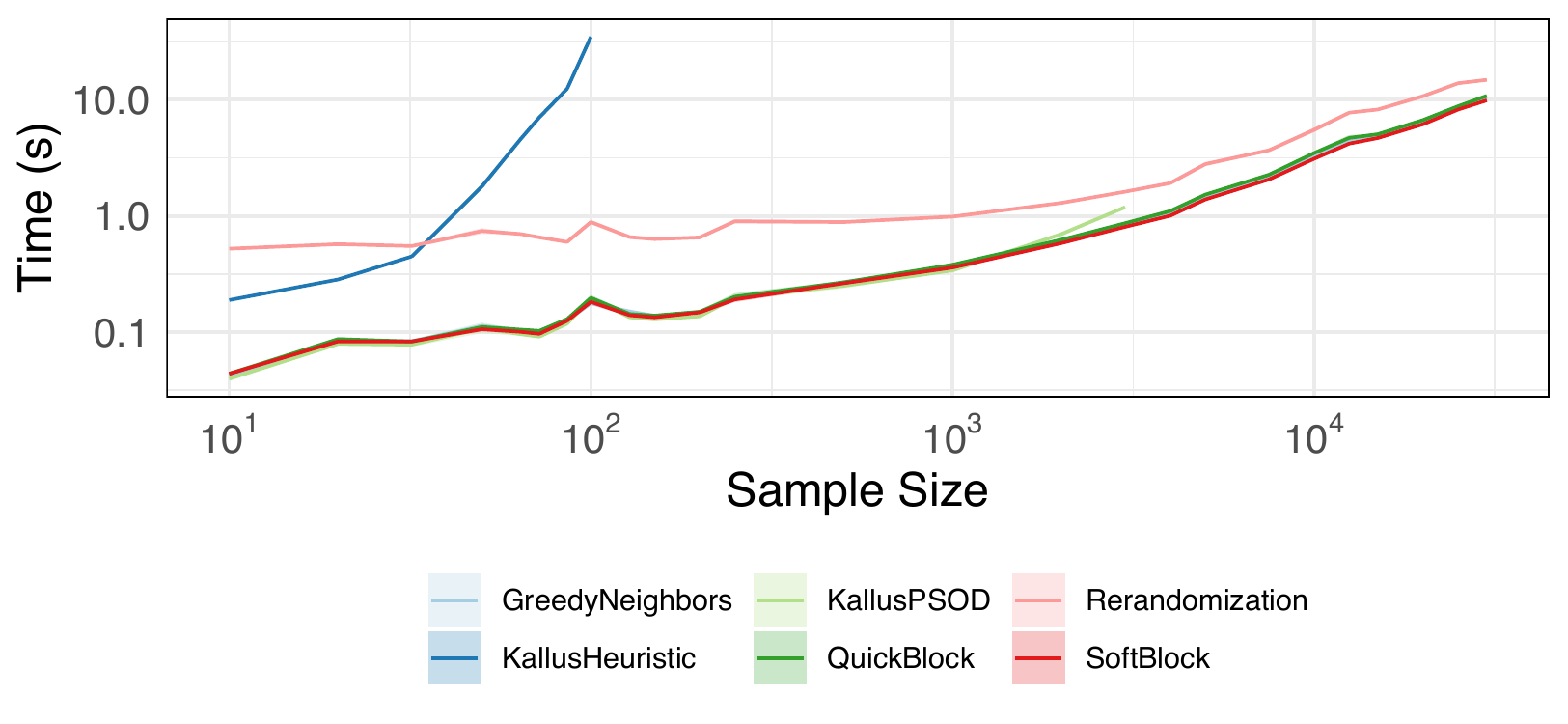}
    \caption{
    Runtime of various methods for experimental design. 
    Both axes are logarithmic.
    }
    \label{fig:runtime}
\end{center}
\end{figure}

\section{Experiments}
\label{sec:experiments}
In this section, we present experiments demonstrating the effectiveness of SoftBlock.
We begin by describing the methods we benchmark against:
\begin{itemize}[nosep,leftmargin=1em,labelwidth=*,align=left] 
\item Bernoulli randomization. This method flips a fair coin for each unit.
This method is minimax optimal for the ATE as per \citet{kallus2018optimal}.
\item Rerandomization. The method of \citet{morgan2012rerandomization} randomizes, checks balance (by Mahalanobis distance) and, if it's too high, repeats. In our implementation, we use the heuristic of \citet{kallus2018optimal}, which accepts a randomization with only 1\% probability.
Thus, it ensures that the chosen design has one of the 1\% most balanced designs (in terms of Mahalanobis distance).
\item QuickBlock. The method of \citet{higgins2016improving} finds an approximate blocking solution based on a $k$ nearest neighbor graph. They find that it performs comparably or better to \citet{greevy2004optimal} ``optimal'' blocking.
\item Kallus' PSOD and Heuristic Designs. These designs of \citet{kallus2018optimal} optimize assignments to minimize mean imbalance in an RKHS.
\item Optimal matched pair designs. These designs solve the maximum weight matching problem and then randomize which unit in each pair receives treatment~\citep{kallus2018optimal, imai2008variance}.
\end{itemize}

We consider a variety of linear and non-linear data generating processes, defined in detail in Table~\ref{tab:dgps} in the Appendix.
The QuickBlockDGP was the primary simulation used in \citet{higgins2016improving}, consisting of four uniform random variables multipled together.
We additionally provide a simulation with a linear outcome, one based on a sinusoid, and one with covariates distributed along two circumscribed circles.

Figure~\ref{fig:runtime} shows the runtimes of these various methods on the TwoCircles data generating process.
At very low sample sizes, \citeauthor{kallus2018optimal}'s \citeyearpar{kallus2018optimal} PSOD method is the fastest way to design an experiment and estimate effects, but by moderate sample sizes is outpaced by QuickBlock, SoftBlock and the GreedyNeighbors methods.
SoftBlock is faster than QuickBlock at nearly all sample sizes, but the two approaches increase computational time at a similar rate.
\begin{figure}
    \centering
    \includegraphics[width=0.8\textwidth]{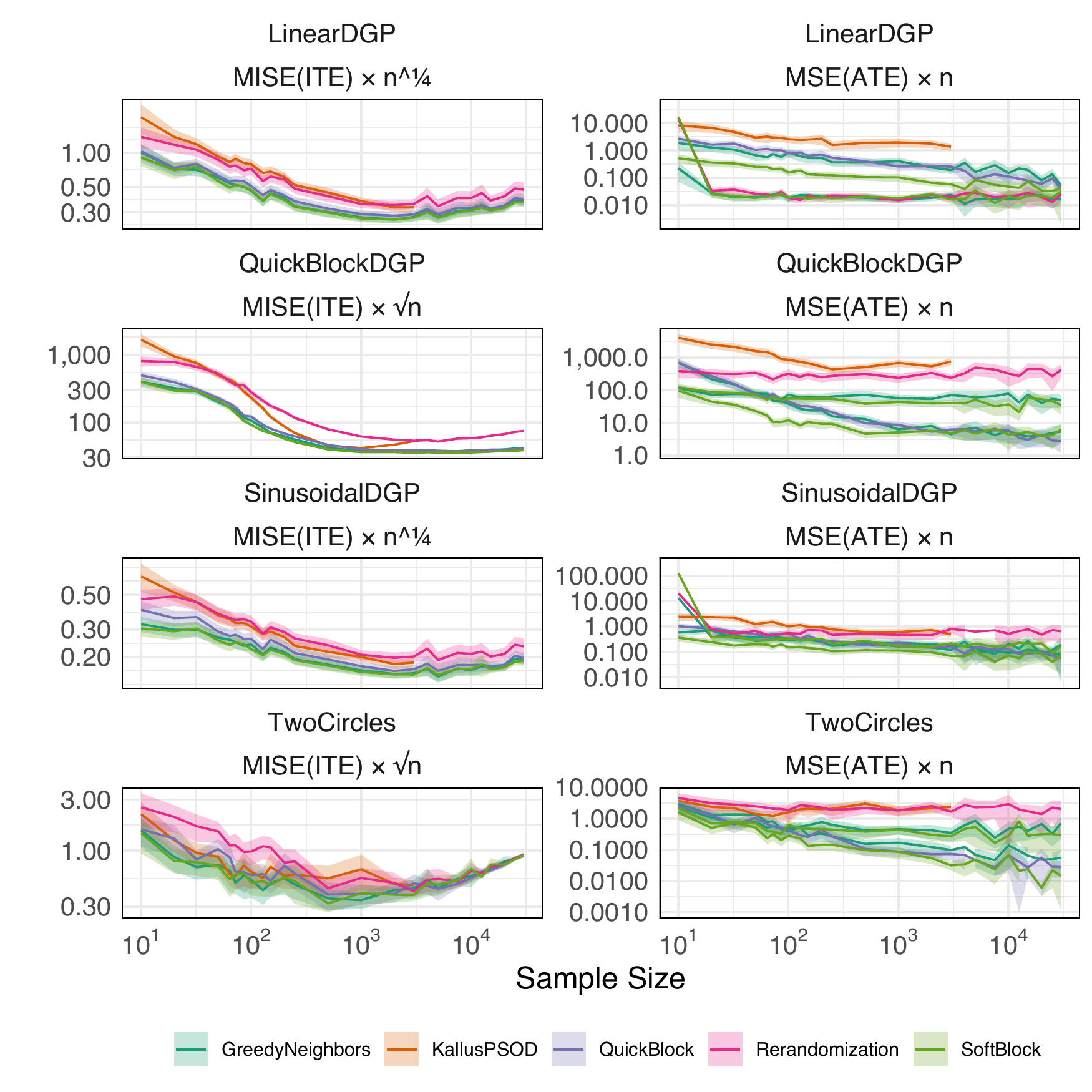}
    \caption{
    SoftBlock performs well across a wide-array of data-generating-processes and sample sizes.
    These plots display mean squared-error and mean integrated-squared-error for the ATE and ITE, respectively. 
    These values are multiplied by a function of sample size for ease of comparison. 
    }
    \label{fig:performance_sims}
\end{figure}

Figure~\ref{fig:performance_sims} shows the performance of the methods on the simulation setups in Table~\ref{tab:dgps}.
Note that values in this chart are normalized for sample size (errors are multiplied by $n^{\sfrac{1}{K}}$ to allow for easier comparison across a wide array of sample sizes).
In the LinearDGP, the methods which estimate the ATE with \citeauthor{lin2013agnostic}'s \citeyearpar{lin2013agnostic} regression-adjustment method are, in fact, \emph{correctly specified} parametric models. 
As such, they (Rerandomization, Greedy Nearest Neighbors and Bernoulli randomization) have much lower error than competitor methods.
SoftBlock, however, converges to nearly the same error by around $n=10000$.
In the LinearDGP, SoftBlock and Greedy Nearest Neighbors are substantially more effective at estimating the ITEs than competitor methods, with SoftBlock outperforming Greedy Nearest Neighbors.
Similar patterns hold in terms of the ITE on all DGPs, with QuickBlock performing the closest to SoftBlock, particularly at higher sample sizes.
For estimating the ATE on the non-linear DGPs, SoftBlock is nearly always the most effective method, often substantially so, for example in moderate sample sizes on the QuickBlockDGP.
The comparison between the GreedyNeighbors design and SoftBlock is informative, since the MST always contains the nearest neighbor graph.
SoftBlock has two main advantages over this design.
First, it reduces variance by using more than just the closest neighbor (for instance, sometimes the two nearest neighbors are both very close, so it would be wise to use both of them).
Second, by being a single connected graph, it ensures that the assignments across different pairs of nearest neighbors are ``lined up''.
That is, it avoids certain bad randomizations, in which, for example, two nearby edges are oriented in the same direction wherein the unit with larger covariate value is assigned treatment in both pairs.
The cut on the MST, on the other hand, is more likely to insulate against this eventuality by connecting these subgraphs and ensuring the orientation of treatments do not match.

Figure~\ref{fig:design_ests} shows the performance of the design-based estimators for ITEs.
In contrast to the previous figure, which estimates ITEs with a random forest T-learner, this figure shows the ITEs estimated by only the specific estimator implied by the design.
This means that, for a blocking estimator, a difference-in-means estimator is used within each block to impute conditional effects (which are assumed constant within blocks).
For SoftBlock, the ITE estimator is the difference of the observed ego unit and its synthetic counterfactual constructed by the weighted average of its neighbors in the minimum-spanning-tree as analyzed in section~\ref{sec:graph_cutting}.
In this comparison, SoftBlock performs substantially and consistently better than other designs.
The comparison to blocking methods in this experiment demonstrate why SoftBlock is able to do better at estimating the ITE than other methods: it is optimized to ensure good interpolation across the entire space.
In particular, we can once again see as informative its comparative stability relative to the matched-pair designs (note that the \citet{kallus2018optimal} matched-pair design is infeasibly slow to display above sample sizes of 100 in this simulation).
\begin{figure}
    \centering
    \includegraphics[width=0.8\textwidth]{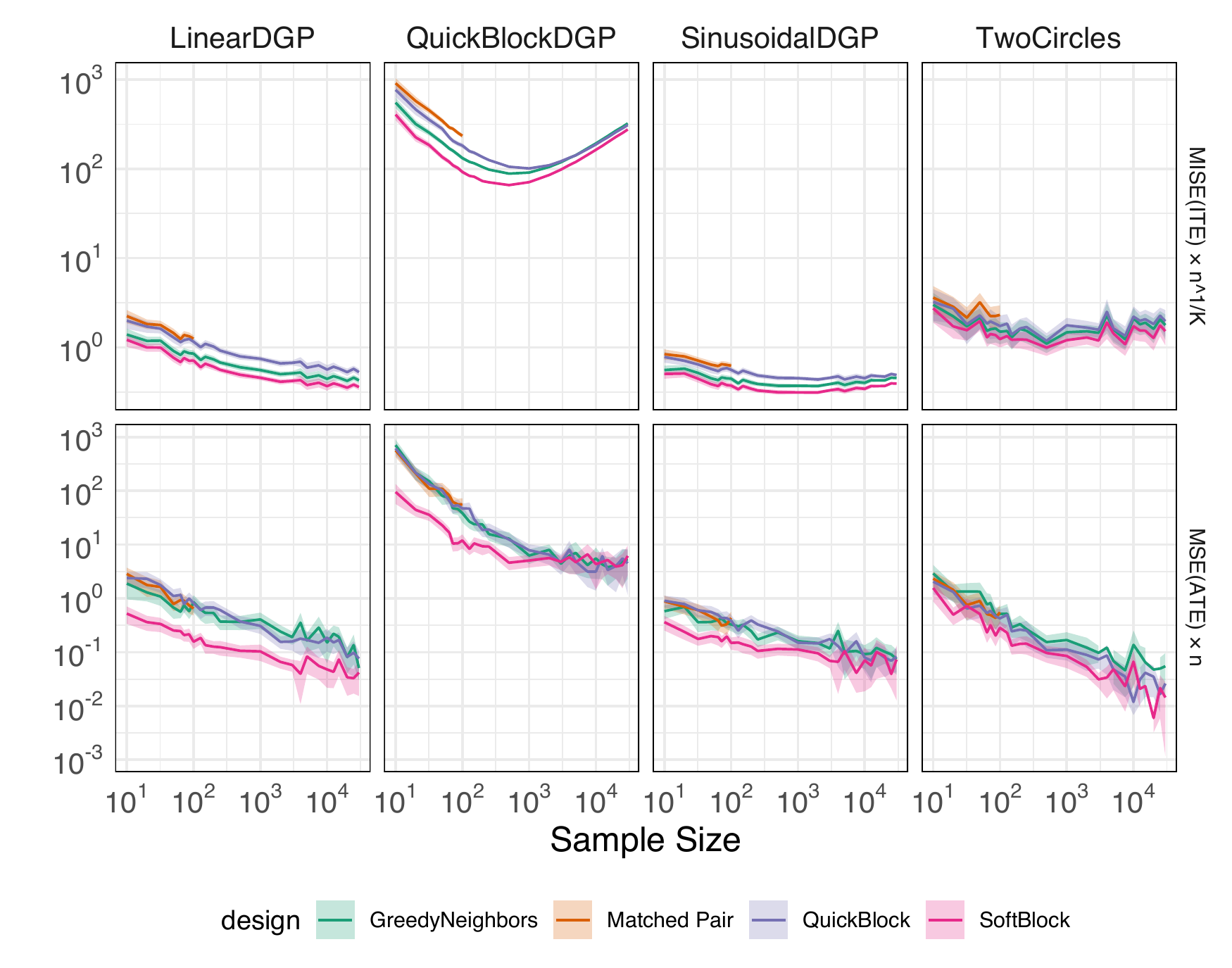}
    \caption{
    The design-based estimators from SoftBlock are appreciably better than existing methods.
    Mean squared-error and mean integrated-squared-error for the design-based estimators of the ATE and ITE, respectively. 
    These values are multiplied by a function of sample size for ease of comparison.
    }
    \label{fig:design_ests}
\end{figure}

\begin{figure}[t]
    \centering
    \includegraphics[width=0.8\textwidth]{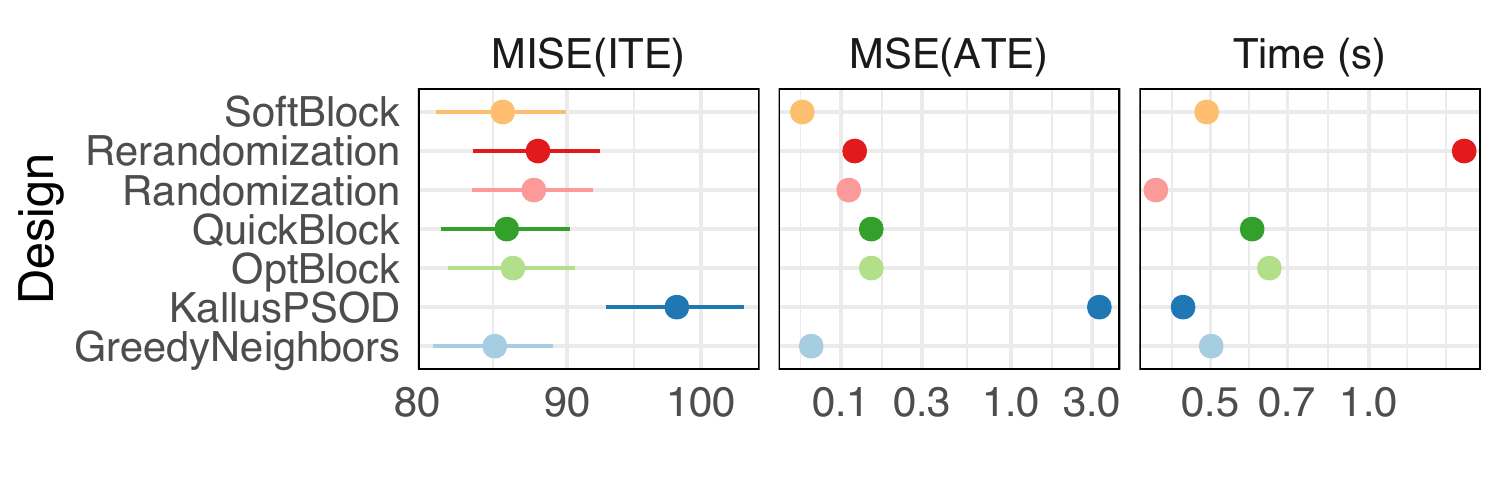}
    \caption{
    Mean squared-error of the ATE, mean integrated-squared-error of the ITE and time to calculate a design and perform estimation.
    }
    \label{fig:ihdp}
\end{figure}

Figure~\ref{fig:ihdp} shows the performance of various methods on the IHDP simulation study, as introduced by~\citet{hill2011bayesian}.
We compare using setting ``B'', in which the outcome model is nonlinear and the treatment effect is not constant.
In this data, SoftBlock provides the lowest error estimates of the ATE, and all methods tend to perform well for estimating the ITE with a random forest T-learner.

In the appendix, figure~\ref{fig:hyperparams} shows the sensitivity of the \citet{kallus2018optimal} methods and SoftBlock to hyperparameters (SoftBlock is very robust, while the \citet{kallus2018optimal} PSOD method only performs comparably when hyperparameters are chosen optimally).

\section{Conclusion}

In this paper, we've provided a framework through which to think about designs for individual treatment effect estimation and provided a formulation of the problem as graph cutting.
Through this framework we presented two novel experimental designs which are well-suited to estimating ITEs and compare them to prior work.
Simulations demonstrate that this method provides an improvement in terms of both computational tractability as well as efficiency.

\bibliographystyle{unsrtnat}
\bibliography{references}

\begin{thebibliography}{33}
\providecommand{\natexlab}[1]{#1}
\providecommand{\url}[1]{\texttt{#1}}
\expandafter\ifx\csname urlstyle\endcsname\relax
  \providecommand{\doi}[1]{doi: #1}\else
  \providecommand{\doi}{doi: \begingroup \urlstyle{rm}\Url}\fi

\bibitem[Friedman and Rafsky(1979)]{friedman1979multivariate}
Jerome~H Friedman and Lawrence~C Rafsky.
\newblock Multivariate generalizations of the wald-wolfowitz and smirnov
  two-sample tests.
\newblock \emph{The Annals of Statistics}, 7\penalty0 (4):\penalty0 697--717,
  1979.

\bibitem[Imbens and Rubin(2015)]{imbens2015causal}
Guido~W Imbens and Donald~B Rubin.
\newblock \emph{Causal inference in statistics, social, and biomedical
  sciences}.
\newblock Cambridge University Press, 2015.

\bibitem[Shalit et~al.(2017)Shalit, Johansson, and
  Sontag]{shalit2017estimating}
Uri Shalit, Fredrik~D Johansson, and David Sontag.
\newblock Estimating individual treatment effect: generalization bounds and
  algorithms.
\newblock In \emph{Proceedings of the 34th International Conference on Machine
  Learning-Volume 70}, pages 3076--3085. JMLR. org, 2017.

\bibitem[Kallus(2018)]{kallus2018optimal}
Nathan Kallus.
\newblock Optimal a priori balance in the design of controlled experiments.
\newblock \emph{Journal of the Royal Statistical Society: Series B (Statistical
  Methodology)}, 80\penalty0 (1):\penalty0 85--112, 2018.

\bibitem[Higgins et~al.(2016)Higgins, S{\"a}vje, and
  Sekhon]{higgins2016improving}
Michael~J Higgins, Fredrik S{\"a}vje, and Jasjeet~S Sekhon.
\newblock Improving massive experiments with threshold blocking.
\newblock \emph{Proceedings of the National Academy of Sciences}, 113\penalty0
  (27):\penalty0 7369--7376, 2016.

\bibitem[Rubin(2011)]{rubin2011causal}
Donald~B Rubin.
\newblock Causal inference using potential outcomes.
\newblock \emph{Journal of the American Statistical Association}, 2011.

\bibitem[Fisher(1935)]{fisher1935design}
Ronald~A. Fisher.
\newblock \emph{The {D}esign of {E}xperiments}.
\newblock Oliver and Boyd, Edinburgh, 1935.

\bibitem[Morgan et~al.(2012)Morgan, Rubin, et~al.]{morgan2012rerandomization}
Kari~Lock Morgan, Donald~B Rubin, et~al.
\newblock Rerandomization to improve covariate balance in experiments.
\newblock \emph{The Annals of Statistics}, 40\penalty0 (2):\penalty0
  1263--1282, 2012.

\bibitem[Hall et~al.(1995)Hall, Horowitz, and Jing]{hall1995blocking}
Peter Hall, Joel~L Horowitz, and Bing-Yi Jing.
\newblock On blocking rules for the bootstrap with dependent data.
\newblock \emph{Biometrika}, 82\penalty0 (3):\penalty0 561--574, 1995.

\bibitem[Greevy et~al.(2004)Greevy, Lu, Silber, and
  Rosenbaum]{greevy2004optimal}
Robert Greevy, Bo~Lu, Jeffrey~H Silber, and Paul Rosenbaum.
\newblock Optimal multivariate matching before randomization.
\newblock \emph{Biostatistics}, 5\penalty0 (2):\penalty0 263--275, 2004.

\bibitem[Li et~al.(2018)Li, Ding, and Rubin]{li2018asymptotic}
Xinran Li, Peng Ding, and Donald~B Rubin.
\newblock Asymptotic theory of rerandomization in treatment--control
  experiments.
\newblock \emph{Proceedings of the National Academy of Sciences}, 115\penalty0
  (37):\penalty0 9157--9162, 2018.

\bibitem[Hansen and Bowers(2008)]{hansen2008covariate}
Ben~B Hansen and Jake Bowers.
\newblock Covariate balance in simple, stratified and clustered comparative
  studies.
\newblock \emph{Statistical Science}, pages 219--236, 2008.

\bibitem[Athey and Imbens(2016)]{athey2016recursive}
Susan Athey and Guido Imbens.
\newblock Recursive partitioning for heterogeneous causal effects.
\newblock \emph{Proceedings of the National Academy of Sciences}, 113\penalty0
  (27):\penalty0 7353--7360, 2016.

\bibitem[Wager and Athey(2018)]{wager2018estimation}
Stefan Wager and Susan Athey.
\newblock Estimation and inference of heterogeneous treatment effects using
  random forests.
\newblock \emph{Journal of the American Statistical Association}, 113\penalty0
  (523):\penalty0 1228--1242, 2018.

\bibitem[Shi et~al.(2019)Shi, Blei, and Veitch]{shi2019adapting}
Claudia Shi, David Blei, and Victor Veitch.
\newblock Adapting neural networks for the estimation of treatment effects.
\newblock In \emph{Advances in Neural Information Processing Systems}, pages
  2507--2517, 2019.

\bibitem[K{\"u}nzel et~al.(2019)K{\"u}nzel, Sekhon, Bickel, and
  Yu]{kunzel2019metalearners}
S{\"o}ren~R K{\"u}nzel, Jasjeet~S Sekhon, Peter~J Bickel, and Bin Yu.
\newblock Metalearners for estimating heterogeneous treatment effects using
  machine learning.
\newblock \emph{Proceedings of the national academy of sciences}, 116\penalty0
  (10):\penalty0 4156--4165, 2019.

\bibitem[Zhu et~al.(2003)Zhu, Lafferty, and Ghahramani]{zhu2003combining}
Xiaojin Zhu, John Lafferty, and Zoubin Ghahramani.
\newblock Combining active learning and semi-supervised learning using gaussian
  fields and harmonic functions.
\newblock In \emph{ICML 2003 workshop on the continuum from labeled to
  unlabeled data in machine learning and data mining}, volume~3, 2003.

\bibitem[Anava and Levy(2016)]{anava2016k}
Oren Anava and Kfir Levy.
\newblock k*-nearest neighbors: From global to local.
\newblock In \emph{Advances in neural information processing systems}, pages
  4916--4924, 2016.

\bibitem[Khot et~al.(2007)Khot, Kindler, Mossel, and
  O’Donnell]{khot2007optimal}
Subhash Khot, Guy Kindler, Elchanan Mossel, and Ryan O’Donnell.
\newblock Optimal inapproximability results for max-cut and other 2-variable
  csps?
\newblock \emph{SIAM Journal on Computing}, 37\penalty0 (1):\penalty0 319--357,
  2007.

\bibitem[Goemans and Williamson(1995)]{goemans1995improved}
Michel~X Goemans and David~P Williamson.
\newblock Improved approximation algorithms for maximum cut and satisfiability
  problems using semidefinite programming.
\newblock \emph{Journal of the ACM (JACM)}, 42\penalty0 (6):\penalty0
  1115--1145, 1995.

\bibitem[Trevisan(2012)]{trevisan2012max}
Luca Trevisan.
\newblock Max cut and the smallest eigenvalue.
\newblock \emph{SIAM Journal on Computing}, 41\penalty0 (6):\penalty0
  1769--1786, 2012.

\bibitem[Imai(2008)]{imai2008variance}
Kosuke Imai.
\newblock Variance identification and efficiency analysis in randomized
  experiments under the matched-pair design.
\newblock \emph{Statistics in medicine}, 27\penalty0 (24):\penalty0 4857--4873,
  2008.

\bibitem[Edmonds(1967)]{edmonds1967optimum}
Jack Edmonds.
\newblock Optimum branchings.
\newblock \emph{Journal of Research of the national Bureau of Standards B},
  71\penalty0 (4):\penalty0 233--240, 1967.

\bibitem[Stuart(2010)]{stuart2010matching}
Elizabeth~A. Stuart.
\newblock Matching methods for causal inference: A review and a look forward.
\newblock \emph{Statist. Sci.}, 25\penalty0 (1):\penalty0 1--21, 02 2010.
\newblock \doi{10.1214/09-STS313}.
\newblock URL \url{https://doi.org/10.1214/09-STS313}.

\bibitem[Hansen and Klopfer(2006)]{hansen2006optimal}
Ben~B Hansen and Stephanie~Olsen Klopfer.
\newblock Optimal full matching and related designs via network flows.
\newblock \emph{Journal of Computational and Graphical Statistics}, 15\penalty0
  (3):\penalty0 609--627, 2006.

\bibitem[Lyons and Peres(2017)]{lyons2017probability}
Russell Lyons and Yuval Peres.
\newblock \emph{Probability on trees and networks}, volume~42.
\newblock Cambridge University Press, 2017.

\bibitem[Gretton et~al.(2012)Gretton, Borgwardt, Rasch, Sch{\"o}lkopf, and
  Smola]{gretton2012kernel}
Arthur Gretton, Karsten~M. Borgwardt, Malte~J. Rasch, Bernhard Sch{\"o}lkopf,
  and Alexander Smola.
\newblock A kernel two-sample test.
\newblock \emph{The Journal of Machine Learning Research}, 13\penalty0
  (1):\penalty0 723--773, 2012.

\bibitem[Henze et~al.(1999)Henze, Penrose, et~al.]{henze1999multivariate}
Norbert Henze, Mathew~D Penrose, et~al.
\newblock On the multivariate runs test.
\newblock \emph{The Annals of Statistics}, 27\penalty0 (1):\penalty0 290--298,
  1999.

\bibitem[Lin(2013)]{lin2013agnostic}
Winston Lin.
\newblock Agnostic notes on regression adjustments to experimental data:
  Reexamining freedman’s critique.
\newblock \emph{The Annals of Applied Statistics}, 7\penalty0 (1):\penalty0
  295--318, 2013.

\bibitem[Hill(2011)]{hill2011bayesian}
Jennifer~L Hill.
\newblock Bayesian nonparametric modeling for causal inference.
\newblock \emph{Journal of Computational and Graphical Statistics}, 20\penalty0
  (1):\penalty0 217--240, 2011.

\bibitem[Kallus(2017)]{kallus2017balanced}
Nathan Kallus.
\newblock Balanced policy evaluation and learning.
\newblock \emph{arXiv preprint arXiv:1705.07384}, 2017.

\bibitem[Gretton et~al.(2008)Gretton, Fukumizu, Teo, Song, Sch{\"o}lkopf, and
  Smola]{gretton2008kernel}
A.~Gretton, K.~Fukumizu, C.H. Teo, L.~Song, B.~Sch{\"o}lkopf, and A.J. Smola.
\newblock A kernel statistical test of independence.
\newblock In \emph{Advances in Neural Information Processing Systems}, pages
  585--592, 2008.

\bibitem[Song(2008)]{song2008learning}
Le~Song.
\newblock \emph{Learning via Hilbert space embedding of distributions}.
\newblock 2008.

\end{thebibliography}
\appendix
\onecolumn

\section{Proofs}
\begin{proof}[Proof of Proposition~\ref{prop:knnbound}\citep{anava2016k}]
\begin{align}
\nonumber
&\left|\sum_{i=1}^{n} w_{\star, i} y_{i}-f\left(x_\star\right)\right|\\ 
\nonumber
&=\left|\sum_{i=1}^{n} \alpha_{i}\left(y_{i}-f\left(x_{i}\right)+f\left(x_{i}\right)\right)-f\left(x_{0}\right)\right| \\
\nonumber
&=\left|\sum_{i=1}^{n} w_{\star, i} \epsilon_{i}+\sum_{i=1}^{n} w_{\star, i}\left(f\left(x_{i}\right)-f\left(x_{0}\right)\right)\right| \\
\nonumber
& \leq\left|\sum_{i=1}^{n} w_{\star, i} \epsilon_{i}\right|+\left|\sum_{i=1}^{n} w_{\star, i}\left(f\left(x_{i}\right)-f\left(x_{\star}\right)\right)\right| \\
\label{app:eq:knnbound}
& \leq \underbrace{\left|\sum_{i=1}^{n} w_{\star, i} \epsilon_{i}\right|}_{\text{variance}} + \underbrace{L\sum_{i=1}^{n} w_{\star, i} d\left(x_{i}, x_{\star}\right)}_{\text{bias}}
\end{align}
Where $L$ is the Lipschitz constant as in assumption \ref{ass:SMOOTH}, and $d$ is a distance measure.
By placing an additional assumption that the noise term can be bounded by a constant, $|\epsilon| \leq b$, the variance term can be further bounded with probability $1 - \delta$ using an application of Hoeffding's inequality~\citep{anava2016k},
\begin{align*}
    \left|\sum_{i=1}^{n} w_{\star, i} \epsilon_{i}\right| \leq C\|\mathbf{w}_\star\|_2,
    C = b\sqrt{2\log(\frac{2}{\delta})}
\end{align*}
\end{proof}
\begin{proof}[Proof of Proposition~\ref{prop:main_bound}]
By triangle inequality, we have for every $i,$
$$
 \left| \frac{\sum_j \mathbbm{1}(a_i\neq a_j) e_{ij}y_j}{\sum_k \mathbbm{1}(a_i\neq a_k) e_{ik}} - y_i \right| \leq 
  \left| \frac{\sum_j \mathbbm{1}(a_i\neq a_j) e_{ij}y_j}{\sum_k \mathbbm{1}(a_i\neq a_k) e_{ik}} -  \frac{\sum_j  e_{ij}y_j}{\sum_k e_{ik}} \right| + \left| y_i -  \frac{\sum_j  e_{ij}y_j}{\sum_k e_{ik}} \right|.
$$
 Let us also denote $d_i = \sum_j e_{ij}$ and $d_i(\text{cut}) = \sum_{j} \mathbbm{1}(a_i\neq a_j) e_{ij}.$
\begin{align*}
\left|\sum_j \frac{e_{ij} y_j}{\sum_k e_{ik}} - \sum_j \frac{\mathbbm{1}(a_i\neq a_j) e_{ij} y_j}{\sum_k \mathbbm{1}(a_i\neq a_j) e_{ik}}\right| & = 
\left|\sum_j \frac{e_{ij} y_j}{d_i} - \sum_{j} \mathbbm{1}(a_i\neq a_j)\frac{ e_{ij} y_j}{d_i(\text{cut})} \right| \\ & = 
\left| 
\sum_{j } \mathbbm{1}(a_i\neq a_j) e_{ij} y_j \left( \frac{1}{d_i} - \frac{1}{d_i(\text{cut})} \right) + \sum_{j} \frac{(1-\mathbbm{1}(a_i\neq a_j)) e_{ij} y_j}{d_i}
\right| \\
& \leq \left | 0 + \sum_{j} \frac{1-\mathbbm{1}(a_i\neq a_j)) e_{ij}}{d_i} \right| \\
& = \sum_{j} \frac{1-\mathbbm{1}(a_i\neq a_j)) e_{ij}}{d_i} \\
&\leq   \frac{d(i) - \sum_{j} \mathbbm{1}(a_i\neq a_j) e_{ij}}{d_{\min}}.
\end{align*}
The first inequality follows by assuming that $0 \leq y_i \leq 1$ and noting that $1/d - 1/d(C) \leq 0 .$ So the first summand is maximized by setting $y_j = 0$ for $j \in C$ and second one is maximized by setting $y_j = 1$ for $j \in D.$ The second inequality follows by $d \geq d_{min}$. Finally, by summing over all $i$, we have 
$$ \sum_i \left| \frac{\sum_j \mathbbm{1}(a_i\neq a_j) e_{ij}y_j}{\sum_k \mathbbm{1}(a_i\neq a_k) e_{ik}} - y_i \right| \leq  \sum_i \left| y_i -  \frac{\sum_j  e_{ij}y_j}{\sum_k e_{ik}} \right| +  \frac{e_{\text{sum}}-\sum_{i,j} \mathbbm{1}(a_i\neq a_j) e_{ij}}{d_{\min}} .
$$
This is minimized by the max-cut. 
\end{proof}

\subsection{\citet{kallus2018optimal} and Maxcut}

Given the graph, an equivalent formulation of Equation~\ref{prob:PSOD} is finding the weighted maximum cut on the graph~(find a subset of the vertices such that the total weight of edges connecting nodes of the two different subsets is maximized). 
The equivalence is made plain by considering the binary quadratic program formulation of Maxcut given in Equation~\ref{eq:qp_maxcut},
\begin{align}
    \nonumber
    &\arg\max_{\mathbf{u} \in \{-1, 1\}} \mathbf{u}^T L \mathbf{u} = \arg\max_{\mathbf{u} \in \{-1, 1\}} \mathbf{u}^TD\mathbf{u} - \mathbf{u}^T G\mathbf{u}  \\
    \label{prob:maxcut}
    &= \arg\max_{\mathbf{u} \in \{-1, 1\}} - \mathbf{u}^T G\mathbf{u}
\end{align}
where we have defined $L$ to be the combinatorial graph Laplacian, $D - G$ where $D$ is a diagonal matrix where $D_i$ is the degree of vertex $i$ and $G$ is the weighted adjacency matrix.
Comparing this to problem \ref{prob:PSOD}, 
\begin{align*}
    &\arg\min_{\mathbf{u} \in \{-1, 1\}} \mathbf{u}^T K \mathbf{u} = \arg\min_{\mathbf{u} \in \{-1, 1\}} \mathbf{u}^T \text{diag}(K)\mathbf{u} + \mathbf{u}^T G \mathbf{u} \\
    &= \arg\min_{\mathbf{u} \in \{-1, 1\}} \mathbf{u}^T G\mathbf{u}
\end{align*}
we can see that problem \ref{prob:maxcut} given by maxcut is isomorphic to problem \ref{prob:PSOD}, given by \citet{kallus2017balanced}'s PSOD strategy. 
Therefore, improved approximations to Maxcut will additionally be improved approximations to \citet{kallus2018optimal}.

\subsection{The kernel objective of \citet{kallus2018optimal} as uncentered Maximum Mean Discrepancy}
The kernel objective of \citet{kallus2018optimal} is defined as
\begin{align}
    \label{eq:kallusQP}
    \min_{\mathbf{u} \in \{-1, 1\}} \mathbf{u}^t K \mathbf{u}
\end{align}
where $K$ is the Gram matrix for some reproducing kernel. 
By using the cyclic properties of the trace, we can rewrite the objective in equation \ref{eq:kallusQP} is equivalent to 
\begin{align}
    \label{eq:kallusQPTrace}
    \min_{\mathbf{u} \in \{-1, 1\}} \text{trace}\left(K \mathbf{u}\mathbf{u}^t \right)
\end{align}
a biased estimator of the Hilbert-Schmidt independence criterion with respect to $K$ and the kernel given by  $\mathbf{u}\mathbf{u}^T$ can be written as~\citep{gretton2008kernel}
\begin{align}
    \label{eq:binHSIC}
    \text{trace}(KH\mathbf{u}\mathbf{u}^TH)\\
    \nonumber
    H = I - \frac{1}{n}\mathbf{1}\mathbf{1}^T
\end{align}
Equation \ref{eq:binHSIC}, in turn, was shown to be equivalent to the biased estimator of the Maximum Mean Discrepancy by \citet{song2008learning}.

\section{Simulated data generating processes}

\begin{table*}[h]
\centering\footnotesize
\begin{tabular}{rcccc}
    \toprule
    \textbf{DGP} & $\mathbf{X}$ & $y(0)$ & $y(1)$\\
    \midrule
    \textbf{LinearDGP} & $X_k = \epsilon_{k}, k \in \{1,\dots,4\}$& $\mathbf{X} \beta + \frac{1}{10}\epsilon_{y(0)}$ & 1 + $\mathbf{X} \beta + \frac{1}{10}\epsilon_{y(1)}$\\
    \textbf{QuickBlockDGP} & $X_k \sim \mathcal{U}(0,10), \forall k \in \{1,2\}$ &  $\prod_{k=1}^2 X_{k} + \epsilon$ & 1 + y(0) \\
    \textbf{SinusoidalDGP} & $X_k = \epsilon_{k}, k \in \{1,\dots,4\}$ & $\sin(\mathbf{X} \beta) + \frac{\epsilon_{y(0)}}{10}$ & 1 + $\sin(\mathbf{X} \beta) + \frac{\epsilon_{y(1)}}{10}$\\
    \textbf{TwoCircles} & latent: $r \sim \mathcal{N}(1 + i \% 2,\frac{1}{10})$ & $\beta_1 s + \beta_2 r + \epsilon_{y(0)}$ & $\beta_1 s + \beta_2 r + \epsilon_{y(1)}$\\
    &$s \sim \mathcal{U}(0,2\pi)$&&&\\
    & observed: $x_1 = r \cos(s)$, & &\\
    &$x_2 = r \sin(s)$&&&\\
    \bottomrule
\end{tabular}
\caption{Data generating processes used in simulations. All $\epsilon$s indicate a standard normal variate and all $\beta$s indicate a standard uniform variate. $\%$ indicates the modulo function, and $i$ indicates a unit's index.}
\label{tab:dgps}
\end{table*}

\section{Estimators for Experiments}
For the estimation of ATEs for Bernoulli and rerandomization, we use regression-adjusted estimators as used in \citet{lin2013agnostic}: a linear regression with covariates mean-centered and interacted with treatment.
QuickBlock uses a blocking estimator as the authors propose, and the \citet{kallus2018optimal} designs use a difference-in-means estimator as proposed.
The matched-pairs design takes the average in within-pair outcomes, which leads to a more efficient estimator than difference-in-means~\citep{imai2008variance}.
When examining ITE estimators, we use random forest based T-learners unless otherwise noted~\citep{athey2016recursive, kunzel2019metalearners}.
All methods use the same hyperparameters, with the number of trees set at $20 \times n^{\frac{1}{4}}$ to ensure model complexity grows with sample size and with maximum tree depth set at 8.

\section{Sensitivity to Hyperparameters}

\begin{figure}
    \centering
    \includegraphics[width=1\textwidth]{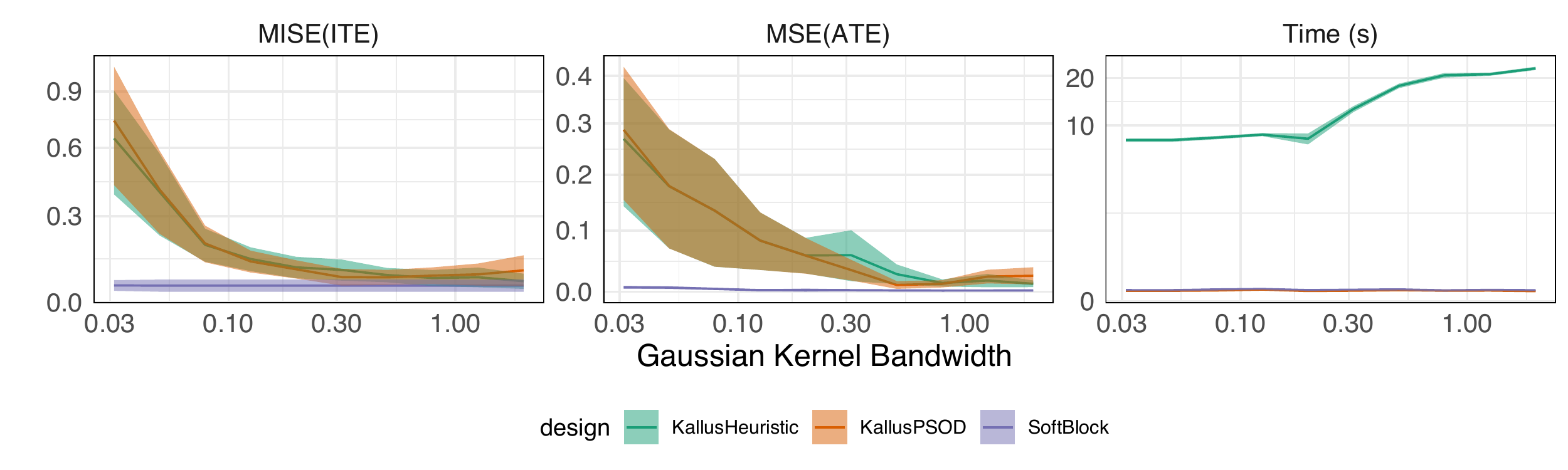}
    \caption{
    SoftBlock is robust to hyperparameter selection.
    This figure shows mean squared-error of the ATE, mean integrated-squared-error of the ITE and time to calculate a design and perform estimation.
    }
    \label{fig:hyperparams}
\end{figure}

Figure~\ref{fig:hyperparams} shows the sensitivity of the \citet{kallus2018optimal} methods and SoftBlock to hyperparameters.
Since both methods are based on similarities defined by a kernel matrix, we plot the performance of these methods on the TwoCircles problem as the bandwidth of the Gaussian kernel changes.
Softblock is not at all sensitive to hyperparameters, performing well at all values, while the \citet{kallus2018optimal} methods perform well only when the hyper-parameters are set well.
In essence, these methods perform covariate adjustment a-priori, but this means that they implicitly specify an outcome model \emph{before data is observed}.
As such, it is very difficult to set these values effectively in practice, as it amounts to tuning a non-parametric model without data for cross-validation or other model selection techniques.

\end{document}